\documentclass[12pt,a4paper]{amsart}

\usepackage{amsmath,amsfonts,amssymb}
\usepackage{xcolor}
\usepackage[hmargin=2cm,vmargin=2cm]{geometry}

\usepackage[hypertexnames=false,hyperfootnotes=false,colorlinks=true,linkcolor=blue,%
citecolor=purple,filecolor=magenta,urlcolor=cyan,unicode,linktocpage=true,pagebackref=false]{hyperref}
\usepackage{nameref,zref-xr}     

\usepackage[all]{xy}   
\usepackage{tikz}

\usepackage{caption}
\usepackage{subcaption}
\usepackage{array}

\usepackage{enumerate}

\newcommand{\mbP}{\mathbb P}
\newcommand{\mbZ}{\mathbb Z}
\newcommand{\mbC}{\mathbb C}

\newcommand{\oM}{\overline{\mathcal M}}

\newcommand{\tu}{{\widetilde u}}

\def\oM{{\overline{\mathcal{M}}}}
\def\CP{{{\mathbb C}{\mathbb P}}}

\def\d{{\partial}}

\newcommand{\eps}{\varepsilon}

\newcommand{\cA}{\mathcal A}
\newcommand{\hcA}{\widehat{\mathcal A}}
\newcommand{\DR}{\mathrm{DR}}

\newcommand{\even}{\mathrm{even}}

\newcommand{\Coef}{\mathrm{Coef}}

\newcommand{\ST}{\mathrm{ST}}
\newcommand{\tv}{\widetilde v}

\newcommand{\gl}{\mathrm{gl}}

\newcommand{\tT}{\widetilde{T}}

\def\d{\partial}

\newcommand{\beq}{\begin{equation}}
\newcommand{\eeq}{\end{equation}}
%%%%%%%%%%%%%%%%%%%%%%%%%%%%%%%%%%%
%Sasha's new commands

\newcommand{\Id}{\mathrm{Id}}

\newcommand{\End}{\mathrm{End}}

\newcommand{\triv}{\mathrm{triv}}

\newcommand{\KdV}{\mathrm{KdV}}

%%%%%%%%%%%%%%%%%%%%%%%%%%%%%%%%%%%

%%%%%%%%%%%%%%%%%%%%%%%%%%%%%%%%%%%%%%%%%%%

\newtheorem{theorem}{Theorem}[section]

\newtheorem{lemma}[theorem]{Lemma}

\newtheorem{example}[theorem]{Example}
\newtheorem{remark}[theorem]{Remark}

\title[Integrable systems of finite type from F-CohFTs without unit]{Integrable systems of finite type from F-cohomological field theories without unit}

\author{Alexandr Buryak}
\address[A. Buryak]{Faculty of Mathematics, National Research University Higher School of Economics, 6 Usacheva str., Moscow, 119048, Russian Federation;\smallskip\newline 
Center for Advanced Studies, Skolkovo Institute of Science and Technology, 1 Nobel str., Moscow, 143026, Russian Federation;\smallskip\newline
P.G. Demidov Yaroslavl State University, 14 Sovetskaya str., Yaroslavl, 150003, Russian Federation}
\email{aburyak@hse.ru}

\author{Danil Gubarevich}
\address[D. Gubarevich]{Faculty of Mathematics, National Research University Higher School of Economics, 6 Usacheva str., Moscow, 119048, Russian Federation}
\email{danilphys180916@mail.ru}

\numberwithin{equation}{section}

\renewcommand{\gg}[2]{\fill[color=white] (#2) circle(3mm) node {\color{black}$#1$}; \draw (#2) circle (3mm)}
\newcommand{\lab}[4]{\draw (#1)++(#2:#3) node {$#4$};}
\newcommand{\leg}[2]{\begin{scope}[shift={(#1)}] \draw (0:0) -- (#2:8mm);\end{scope}}

\begin{document}

\tikz{\coordinate (A) at (0,0);\coordinate (B) at (0,14mm);\coordinate (BL) at (-12mm,14mm);\coordinate (BR) at (12mm,14mm);

%\coordinate (AAA) at (-0.8,0);\coordinate (AA) at (0,0.8);\coordinate (C) at (1.6,0);\coordinate (D) at (2.4,0);\coordinate (AB1) at (0.4,0.18);\coordinate (AB2) at (0.4,-0.18);\coordinate (E) at (-0.69,0.4);\coordinate (F) at (-0.69,-0.4);\coordinate (BC1) at (1.2,0.18);\coordinate (BC2) at (1.2,-0.18);\coordinate (G) at (-1.38,0);\coordinate (H) at (-2.18,0);\coordinate (I) at (-2.98,0);
}

\begin{abstract}
One of many manifestations of a deep relation between the topology of the moduli spaces of algebraic curves and the theory of integrable systems is a recent construction of Arsie, Lorenzoni, Rossi, and the first author associating an integrable system of evolutionary PDEs to an F-cohomological field theory (F-CohFT), which is a collection of cohomology classes on the moduli spaces of curves satisfying certain natural splitting properties. Typically, these PDEs have an infinite expansion in the dispersive parameter, which happens because they involve contributions from the moduli spaces of curves of arbitrarily large genus. In this paper, for each rank $N\ge 2$, we present a family of F-CohFTs without unit, for which the equations of the associated integrable system have a finite expansion in the dispersive parameter. For $N=2$, we explicitly compute the primary flows of this integrable system.
\end{abstract}

\date{\today}

\maketitle

\section{Introduction}

The fact that integrable systems provide an appropriate tool for the description of the topology of the moduli spaces~$\oM_{g,n}$ of stable algebraic curves of genus $g$ with $n$ marked points was first observed by Witten~\cite{Wit91} in his famous conjecture, proved by Kontsevich~\cite{Kon92}. This result says that the generating series of integrals over $\oM_{g,n}$ of monomials in psi-classes (the first Chern classes of tautological line bundles) is controlled by a special solution of the Korteweg--de Vries (KdV) hierarchy. Various versions of Witten's conjecture were proposed (see, e.g.,~\cite{DZ04,OP06,Wit93,FSZ10}), when it was realized that integrable systems appear in a very general context, where the central role is played by the notion of a \emph{cohomological field theory} (CohFT) introduced by Kontsevich and Manin~\cite{KM94}. CohFTs are systems of cohomology classes on the moduli spaces~$\oM_{g,n}$ that are compatible with natural morphisms between the moduli spaces. There is a general result~\cite{BPS12} saying that the generating series of correlators of an arbitrary semisimple CohFT is controlled by a special solution of a certain integrable system, determined by the CohFT uniquely. This integrable system is called the \emph{hierarchy of topological type} or the \emph{Dubrovin--Zhang (DZ) hierarchy}.

\medskip

\begin{remark}
Since the concept of an integrable system can be interpreted in various ways, we would like to clarify that for the purposes of this paper we define an integrable system as an infinite collection of pairwise commuting evolutionary flows.
\end{remark}

\medskip

It was first observed by Dubrovin and Zhang~\cite{DZ01} that the hierarchies of topological type form a wide class of integrable systems that can be conjecturally described independently of the geometry, using only the language of integrable systems (see further developments in~\cite{DLYZ16,LWZ21}). Thus, the topology of the moduli spaces of algebraic curves can be viewed as a tool for constructing integrable systems.

\medskip

There is another construction of an integrable system, the so-called \emph{double ramification (DR) hierarchy}, associated to a CohFT, suggested in~\cite{Bur15}. For a semisimple CohFT, the DR hierarchy is conjecturally Miura equivalent to the DZ hierarchy. The advantage of the DR hierarchy is that its equations are constructed very explicitly in terms of certain integrals over~$\oM_{g,n}$. Moreover, the DR hierarchy is defined for objects that are more general than CohFTs: for the so-called \emph{F-cohomological field theories} (F-CohFTs)~\cite{ABLR21,ABLR23}.

\medskip

Typically, the equations of the DR hierarchy associated to an F-CohFT have an infinite expansion in the dispersive parameter. This happens because the equations involve contributions from $\oM_{g,n}$ with arbitrarily large $g$. So it is natural to try to determine F-CohFTs, for which the associated DR hierarchy is of \emph{finite type}, meaning that all the equations have a finite expansion in the dispersive parameter. For CohFTs associated to simple singularities, the DR hierarchy is of finite type. For the partial CohFTs associated to these CohFTs~\cite{LRZ15}, the DR hierarchy is also of finite type. However, as far as we know, no other examples of F-CohFTs for which the associated DR hierarchy is of finite type have been found.

\medskip

In this paper, for each $N\ge 2$, we present a family of F-CohFTs without unit of rank $N$, depending on a vector $G\in\mbC^N$ and a strictly upper triangular $N\times N$ matrix $R_1$ satisfying $R_1^2=0$, and prove that the associated DR hierarchy is of finite type. Finally, for $N=2$, we compute explicitly the primary flows $\frac{\d}{\d t^1_0}$ and $\frac{\d}{\d t^2_0}$ of the DR hierarchy.

\medskip

\subsection*{Notation and conventions}
 
\begin{itemize}
\item Throughout the text we use the Einstein summation convention for repeated upper and lower Greek indices.

\smallskip

\item When it doesn't lead to a confusion, we use the symbol $*$ to indicate any value, in the appropriate range, of a sub- or superscript.

\smallskip

\item For a topological space~$X$ let $H^*(X)$ denote the cohomology ring of~$X$ with coefficients in~$\mbC$.
\end{itemize}

\medskip

\subsection*{Acknowledgements}

The work of A.~B. is supported by the Russian Science Foundation (Grant no. 20-71-10110). A. B. is grateful to A. Mikhailov, P. Rossi, and V. Sokolov for motivating discussions about the finiteness of the integrable systems associated to F-CohFTs. 

\medskip

%%%%%%%%%%%%%%%%%%%%%%%%%%%%%%%%%%%%%%%%%%%%%%%%%%%%%%%%%%%%%%%%%%%
%%%%%%%%%%%%%%%%%%%%%%%%%%%%%%%%%%%%%%%%%%%%%%%%%%%%%%%%%%%%%%%%%%%

\section{F-cohomological field theories without unit and DR hierarchies}

\subsection{F-cohomological field theories without unit}

An \emph{F-cohomological field theory without unit} (F-CohFT without unit) is a system of linear maps 
$$
c_{g,n+1}\colon V^*\otimes V^{\otimes n} \to H^\even(\oM_{g,n+1}),\quad 2g-1+n>0,
$$
where $V$ is an arbitrary finite dimensional vector space, such that the following axioms are satisfied.
\begin{enumerate}[(i)]
\item The maps $c_{g,n+1}$ are equivariant with respect to the $S_n$-action permuting the $n$ copies of~$V$ in $V^*\otimes V^{\otimes n}$ and the last $n$ marked points in $\oM_{g,n+1}$, respectively.

\smallskip

\item Fixing a basis $e_1,\ldots,e_{\dim V}$ in $V$ and the dual basis $e^1,\ldots,e^{\dim V}$ in $V^*$, the following property holds:
$$
\gl^* c_{g_1+g_2,n_1+n_2+1}(e^{\alpha_0}\otimes\otimes_{i=1}^{n_1+n_2} e_{\alpha_i}) = c_{g_1,n_1+2}(e^{\alpha_0}\otimes \otimes_{i\in I} e_{\alpha_i} \otimes e_\mu)\otimes c_{g_2,n_2+1}(e^{\mu}\otimes \otimes_{j\in J} e_{\alpha_j})
$$
for $1 \leq\alpha_0,\alpha_1,\ldots,\alpha_{n_1+n_2}\leq \dim V$, where $I \sqcup J = \{2,\ldots,n_1+n_2+1\}$, $|I|=n_1$, $|J|=n_2$, and $\gl\colon\oM_{g_1,n_1+2}\times\oM_{g_2,n_2+1}\to \oM_{g_1+g_2,n_1+n_2+1}$ is the corresponding gluing map. Clearly the axiom doesn't depend on the choice of a basis in $V$.
\end{enumerate}
The dimension of $V$ is called the \emph{rank} of the F-CohFT without unit.

\medskip

An \emph{F-CohFT} is an F-CohFT without unit endowed with a nonzero vector $e\in V$, called the \emph{unit}, such that the following additional property is satisfied:
\begin{enumerate}[(i)]
\setcounter{enumi}{2}
\item $\pi^* c_{g,n+1}(\omega\otimes\otimes_{i=1}^n v_i)=c_{g,n+2}(\omega\otimes\otimes_{i=1}^n  v_i\otimes e)$ for $\omega\in V^*$ and $v_1,\ldots,v_n\in V$, where $\pi\colon\oM_{g,n+2}\to\oM_{g,n+1}$ is the map that forgets the last marked point. Moreover, $c_{0,3}(\omega\otimes v \otimes e) = \omega(v)$ for $\omega\in V^*$ and $v\in V$.
\end{enumerate}

\medskip

An F-CohFT is called an \emph{F-topological field theory} (F-TFT) if  $c_{g,n+1}(\omega\otimes\otimes_{i=1}^n v_i)\in H^0(\oM_{g,n+1})$ for all $\omega\in V^*$ and $v_1,\ldots,v_n\in V$.

\medskip

\subsection{DR hierarchy}

Let us fix $N\ge 1$ and let $u^1,\ldots,u^N$ be formal variables. To the formal variables $u^\alpha$ we attach formal variables $u^\alpha_d$ with $d\ge 0$ and introduce the ring of \emph{differential polynomials} $\cA:=\mbC[[u^*]][u^*_{\ge 1}]$. We identify $u^\alpha_0=u^\alpha$ and also denote $u^\alpha_x:=u^\alpha_1$, $u^{\alpha}_{xx}:=u^\alpha_2$, \ldots. An operator $\d_x\colon\cA\to\cA$ is defined by $\d_x:=\sum_{n\ge 0}u^\alpha_{n+1}\frac{\d}{\d u^\alpha_n}$. The extended space of differential polynomials is defined by $\hcA:=\cA[[\eps]]$. 

\medskip

Denote by $\psi_i\in H^2(\oM_{g,n})$ the first Chern class of the line bundle over~$\oM_{g,n}$ formed by the cotangent lines at the $i$-th marked point of stable curves. Denote by~$\mathbb E$ the rank~$g$ Hodge vector bundle over~$\oM_{g,n}$ whose fibers are the spaces of holomorphic one-forms on stable curves. Let $\lambda_j:= c_j(\mathbb E)\in H^{2j}(\oM_{g,n})$. 

\medskip

For any $a_1,\dots,a_n\in \mbZ$, $\sum_{i=1}^n a_i =0$, let $\DR_g(a_1,\ldots,a_n) \in H^{2g}(\oM_{g,n})$ be the {\it double ramification (DR) cycle}. The DR cycle is the pushforward, through the forgetful map to $\oM_{g,n}$, of the virtual fundamental class of the moduli space of projectivized stable maps to~$\CP^1$ relative to~$0$ and~$\infty$, with ramification profile $a_1,\ldots,a_n$ at the marked points (see, e.g.,~\cite{BSSZ15} for more details). More precisely, the pushforward itself lies in $H_{2(2g-3+n)}(\oM_{g,n})$, while its Poincar\'e dual cohomology class lies in $H^{2g}(\oM_{g,n})$. By abuse of notation, we will denote both the pushforward and its Poincar\'e dual by $\DR_g(a_1,\ldots,a_n)$. The crucial property of the DR cycle is that for any cohomology class $\theta\in H^*(\oM_{g,n})$ the integral $\int_{\oM_{g,n+1}}\lambda_g\DR_g\left(-\sum a_i,a_1,\ldots,a_n\right)\theta$ is a homogeneous polynomial in $a_1,\ldots,a_n$ of degree~$2g$ (see, e.g.,~\cite{Bur15}). 

\medskip

Consider now an arbitrary F-CohFT without unit of rank $N$ and define differential polynomials $P^\alpha_{\beta,d}\in\hcA$, $1\le\alpha,\beta\le N$, $d\ge 0$, by
\begin{equation*}
P^\alpha_{\beta,d}:=\sum_{\substack{g,n\geq 0,\,2g+n>0\\k_1,\ldots,k_n\geq 0\\\sum_{j=1}^n k_j=2g}} \frac{\eps^{2g}}{n!} \Coef_{(a_1)^{k_1}\ldots(a_n)^{k_n}} \left(\int_{\DR_g(-\sum_{j=1}^n a_j,0,a_1,\ldots,a_n)} \hspace{-2.3cm}\lambda_g \psi_2^d c_{g,n+2}(e^\alpha\otimes e_\beta\otimes \otimes_{j=1}^n e_{\alpha_j}) \right)\prod_{j=1}^n u^{\alpha_j}_{k_j}.
\end{equation*}
The \emph{DR hierarchy} is the following system of evolutionary PDEs:
\begin{gather}\label{eq:DR hierarchy}
\frac{\d u^\alpha}{\d t^\beta_d}=\d_x P^\alpha_{\beta,d},\qquad 1\le \alpha,\beta\le N,\quad d\ge 0.
\end{gather}

\medskip

\begin{theorem}
All the equations of the DR hierarchy are compatible with each other, namely, 
$$
\frac{\d}{\d t^{\beta_2}_{d_2}}\left(\frac{\d u^\alpha}{\d t^{\beta_1}_{d_1}}\right) = \frac{\d}{\d t^{\beta_1}_{d_1}}\left(\frac{\d u^\alpha}{\d t^{\beta_2}_{d_2}}\right),\qquad 1\leq \alpha,\beta_1,\beta_2\leq N,\quad d_1,d_2\geq 0.
$$
\end{theorem}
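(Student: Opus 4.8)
The plan is to reduce the stated compatibility to the symmetry of an explicit differential polynomial and then to prove that symmetry by interpreting both sides as integrals over the moduli spaces. Since $\d_x$ commutes with all the flows and is injective on $\hcA$ modulo $x$-independent terms, it suffices to prove the stronger identity
$$
\frac{\d P^\alpha_{\beta_1,d_1}}{\d t^{\beta_2}_{d_2}}=\frac{\d P^\alpha_{\beta_2,d_2}}{\d t^{\beta_1}_{d_1}}
$$
up to an $x$-independent term. Expanding the left-hand side by the chain rule gives
$$
\frac{\d P^\alpha_{\beta_1,d_1}}{\d t^{\beta_2}_{d_2}}=\sum_{k\ge 0}\frac{\d P^\alpha_{\beta_1,d_1}}{\d u^\gamma_k}\,\d_x^{k+1}P^\gamma_{\beta_2,d_2},
$$
so the entire theorem becomes the invariance of this expression under the exchange $(\beta_1,d_1)\leftrightarrow(\beta_2,d_2)$. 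I would package the families $P^\alpha_{\beta,d}$ into generating series to organize the bookkeeping of the genus, the number of field insertions $u^{\alpha_j}$, and the powers of the ramification variables.

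The second step is to give a geometric meaning to the substitution of one flow into the other. The derivative $\d P^\alpha_{\beta_1,d_1}/\d u^\gamma_k$ removes one field insertion $e_{\alpha_j}=e_\gamma$ from the integral defining $P^\alpha_{\beta_1,d_1}$ and records the index $k$ of the corresponding ramification power, while $\d_x^{k+1}P^\gamma_{\beta_2,d_2}$ supplies an integral over a second moduli space carrying the $e_{\beta_2}$-insertion together with $\psi^{d_2}$. Using the splitting axiom (ii) of the F-CohFT together with the compatibility of $\lambda_g\DR_g$ with the gluing map $\gl$, I would rewrite the product of these two integrals as a single integral over a space $\oM_{g,n+3}$ whose two components are joined at a separating node: the covector $e^\mu$ and vector $e_\mu$ at the two preimages of the node are contracted by axiom (ii), the genus and the remaining field insertions are distributed over the two sides, and the two distinguished points carry $(e_{\beta_1},\psi^{d_1})$ and $(e_{\beta_2},\psi^{d_2})$ on the two sides of the node.

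The heart of the argument, and where I expect the main difficulty, is the behavior of $\lambda_g\DR_g$ under this gluing. Across a separating node the Hodge class splits exactly, $\gl^*\lambda_g=\lambda_{g_1}\lambda_{g_2}$, while the pullback of the DR cycle is governed by the conservation of the ramification through the node. The chain rule attaches the powers $\d_x^{k+1}$ to the inserted flow $P^\gamma_{\beta_2,d_2}$, and these powers, once transferred to the ramification variables, must match exactly the ramification weight carried across the node by the DR cycle; the polynomiality of the DR integrals quoted in the excerpt is precisely what makes these two sources of ramification powers agree and what keeps the resulting sum finite. Carrying out this matching carefully — while correctly accounting for the distinguished ramification-zero point bearing $\psi^{d}$, and absorbing any $x$-independent discrepancy via a string/dilaton-type relation for the DR cycle — identifies $\d P^\alpha_{\beta_1,d_1}/\d t^{\beta_2}_{d_2}$ with a single integral over $\oM_{g,n+3}$ in which the two distinguished points enter symmetrically. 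That manifest symmetry under $(\beta_1,d_1)\leftrightarrow(\beta_2,d_2)$ is exactly the required commutativity, and I would expect the genuine work to lie in pinning down the precise boundary/gluing formula for $\lambda_g\DR_g$ and reconciling it with the powers produced by $\d_x$.
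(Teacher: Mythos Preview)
The paper does not actually prove this theorem: its entire argument is the observation that the compatibility was already established for F-CohFTs in \cite[Theorem~5.1]{BR21} and \cite[Theorem~2]{ABLR21}, and that the unit axiom~(iii) is never invoked there, so the same proof applies verbatim to F-CohFTs without unit. Your proposal instead sketches the strategy of those cited proofs, so in spirit you are on the same track as the literature the paper defers to.

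That said, your sketch has two concrete soft spots that would have to be repaired before it becomes a proof. First, the suggestion to ``absorb any $x$-independent discrepancy via a string/dilaton-type relation'' is both unnecessary and unavailable: the string equation is exactly the part of the theory that uses the unit~$e$, which you do not have here, and in the actual argument no such correction term appears. The identity $\frac{\d P^\alpha_{\beta_1,d_1}}{\d t^{\beta_2}_{d_2}}=\d_x\Omega^\alpha_{\beta_1,d_1;\beta_2,d_2}$ holds on the nose, where $\Omega^\alpha_{\beta_1,d_1;\beta_2,d_2}$ is the manifestly symmetric differential polynomial built from integrals of $\lambda_g\psi_2^{d_1}\psi_3^{d_2}c_{g,n+3}(e^\alpha\otimes e_{\beta_1}\otimes e_{\beta_2}\otimes\cdots)$ over $\DR_g(-\sum a_j,0,0,a_1,\ldots,a_n)$. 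Second, the step you flag as the heart of the matter --- the behaviour of $\lambda_g\DR_g$ under gluing --- is not handled by a simple ``conservation of ramification through the node'' restriction formula. The DR cycle does not restrict to a product of DR cycles on an arbitrary separating boundary divisor; what is actually used is the formula from \cite{BSSZ15} expressing $a_i\psi_i\cdot\DR_g(a_1,\ldots,a_n)$ as a sum of boundary classes, each carrying a product of smaller DR cycles. It is this identity, applied at the distinguished marked point, that converts $\d_x\Omega$ (which on the integral side is multiplication by $\sum a_j$) into the sum over splittings that matches the chain-rule expansion of $\frac{\d P}{\d t}$. Once you replace your heuristic ``restriction'' picture with this precise relation, and drop the appeal to string/dilaton, your outline becomes the argument of \cite{ABLR21}.
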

\begin{proof}
For F-CohFTs, the theorem is proved in~\cite[Theorem~5.1]{BR21} (see more details in~\cite[Theorem~2]{ABLR21}), however, the existence of a unit is never used there. So the same proof works for F-CohFTs without unit.
\end{proof}

\medskip

\begin{example}\label{example:KdV}
Consider the trivial F-CohFT given by $V=\mbC$, $e_1=e=1\in\mbC=V$, and
$$
c^\triv_{g,n+1}(e^1\otimes e_1^{\otimes n}):=1\in H^0(\oM_{g,n+1}).
$$
Then the corresponding DR hierarchy is the KdV hierarchy~\cite[Section~4.3.1]{Bur15} (we denote $u_d:=u^1_d$ and $t_d:=t^1_d$)
$$
\frac{\d u}{\d t_d}=\d_x P^\KdV_d,
$$
where 
$$
P^\KdV_0=u,\quad P^\KdV_1=\frac{u^2}{2}+\frac{\eps^2}{12}u_{xx},\quad P_2^\KdV=\frac{u^3}{6}+\frac{\eps^2}{24}(2u u_{xx}+u_x^2)+\frac{\eps^4}{240}u_{xxxx},
$$
and a general formula for $P^\KdV_d$ is
$$
\d_x P^\KdV_d=\frac{\eps^{2d+2}}{2(2d+1)!!}\left[\left(L^{d+\frac{1}{2}}\right)_+,L\right],\quad L=\d_x^2+2\eps^{-2}u.
$$
\end{example}

\medskip

%%%%%%%%%%%%%%%%%%%%%%%%%%%%%%%%%%%%%%%%%%%%%%%%%%%%%%%%%%%%%%%%%%%
%%%%%%%%%%%%%%%%%%%%%%%%%%%%%%%%%%%%%%%%%%%%%%%%%%%%%%%%%%%%%%%%%%%

\section{A family of F-CohFTs without unit and the finiteness of the DR hierarchy}

\subsection{$R$-matrices}

Let us briefly recall a group action on F-CohFTs without unit constructed in~\cite[Section~4.1]{ABLR23}.

\medskip

Let us fix a finite dimensional vector space $V$ of dimension $N$ and consider the group $G_+$ of $\End(V)$-valued formal power series of the form $R(z)=\Id+\sum_{i\geq 1} R_iz^i$. Let us denote by $R^{-1}(z)$ the inverse element to $R(z)$ and by $R(z)^t$ the transposed $\End(V^*)$-valued power series. We refer to such an element of $G_+$ as an \emph{$R$-matrix}.

\medskip

Let $\Gamma$ be a stable graph of genus $g$ with $n$ marked legs (see \cite[Section~0.2]{PPZ15} for the definition) and $V(\Gamma)$, $E(\Gamma)$ be its sets of vertices and edges, each vertex $v\in V(\Gamma)$ marked with a genus $g(v)$ and with valence $n(v)$. We denote by $E[v]$ the set of edges incident to $v$. Let $\xi_{\Gamma}\colon\prod_{v\in V(\Gamma)} \oM_{g(v),n(v)} \to \oM_{g,n}$ be the natural map whose image is the closure of the locus of stable curves whose dual graph is $\Gamma$. 

\medskip

By \emph{stable tree} we mean a stable graph $\Gamma$ with the first Betti number $b_1(\Gamma)$ equal to zero. Let~$\ST_{g,n+1}$ be the set of stable trees of genus $g$ with $n+1$ marked legs. Then $T \in\ST_{g,n+1}$ can be seen as a rooted tree where the root is the vertex to which leg~$1$ is attached and each edge $e\in E(T)$ is splitted into two half-edges $e'$ and $e''$, where~$e'$ is closer to the root and~$e''$ is farther from the root. Consider a function $l_T\colon V(T)\to\mbZ_{\ge 1}$ that is uniquely determined by the condition that its value on the root is equal to $1$ and that if a vertex $v$ is the mother of a vertex $v'$, then $l_T(v')=l_T(v)+1$. The number $l_T(v)$ is called the \emph{level} of a vertex $v\in V(T)$. The number $\deg(T):=\max_{v\in V(T)}l_T(v)$ is called the \emph{degree} of $T$. 

\medskip

The action of $R(z)\in G_+$ on an F-CohFT without unit $c_{g,n+1}\colon V^*\otimes V^{\otimes n} \to H^\even(\oM_{g,n+1})$ is the system of maps
\begin{equation}\label{eq:R-action}
(R(z)c)_{g,n+1} := \sum_{T \in\ST_{g,n+1}}\xi_{T*}\left[\prod_{v\in V(T)}c_{g(v),n(v)}R(-\psi_1)^t \prod_{k=2}^{n+1} R^{-1}(\psi_k)\prod_{e\in E(T)} \frac{\Id-R^{-1}(\psi_{e'})R(-\psi_{e''})}{\psi_{e'}+\psi_{e''}}\right].
\end{equation}
In~\cite[Theorem~4.3]{ABLR23} the authors proved that $(R(z)c)_{g,n+1}$ is again an F-CohFT without unit.

\medskip

\subsection{DR hierarchies of finite type}\label{subsection:DR hierarchies of finite type}

Let $N\ge 2$, $V=\mbC^N$, and $e_1,\ldots,e_N\in\mbC^N$ be the standard basis in $\mbC^N$. Following~\cite[Section~4.4]{ABLR23}, consider the following F-TFT, parameterized by a vector $G=(G^1,\ldots,G^N)\in\mbC^N$:
$$
c^{\triv,G}_{g,n+1}(e^{i_0}\otimes\otimes_{j=1}^n e_{i_j}):=
\begin{cases}
(G^{i_0})^g,&\text{if $i_0=i_1=\ldots=i_n$},\\
0,&\text{otherwise}.
\end{cases}
$$

\medskip

\begin{theorem}\label{theorem:finite DR hierarchy}
Consider an arbitrary vector $G=(G^1,\ldots,G^N)\in\mbC^N$ and a strictly upper triangular $N\times N$ matrix $R_1$ such that $R_1^2=0$. Then the DR hierarchy corresponding to the F-CohFT without unit $\left((\Id+R_1 z)c^{\triv,G}\right)_{g,n+1}$ is of finite type, i.e., all the differential polynomials~$P^\alpha_{\beta,d}$ from~\eqref{eq:DR hierarchy} are polynomials in $\eps$.
\end{theorem}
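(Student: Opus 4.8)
The plan is to make the action \eqref{eq:R-action} of $R(z)=\Id+R_1z$ completely explicit and then to bound, for each fixed $\alpha,\beta,d$, the genera $g$ that contribute to $P^\alpha_{\beta,d}$ in \eqref{eq:DR hierarchy}; since the dispersive parameter enters only through the prefactor $\eps^{2g}$, a uniform bound on $g$ immediately yields polynomiality in $\eps$. First I would simplify the building blocks of \eqref{eq:R-action}. Since $R_1^2=0$ we have $R^{-1}(z)=\Id-R_1z$, so $R(-\psi_1)^t=\Id-R_1^t\psi_1$, $R^{-1}(\psi_k)=\Id-R_1\psi_k$, and, crucially, the edge factor collapses to a constant:
$$
\frac{\Id-R^{-1}(\psi_{e'})R(-\psi_{e''})}{\psi_{e'}+\psi_{e''}}=\frac{\Id-(\Id-R_1\psi_{e'})(\Id-R_1\psi_{e''})}{\psi_{e'}+\psi_{e''}}=\frac{R_1(\psi_{e'}+\psi_{e''})}{\psi_{e'}+\psi_{e''}}=R_1.
$$
Thus $\bigl((\Id+R_1z)c^{\triv,G}\bigr)_{g,n+2}$ is a sum over $T\in\ST_{g,n+2}$ of $\xi_{T*}$ of a product in which each vertex $v$ carries the diagonal factor $c^{\triv,G}_{g(v),n(v)}$ (contributing $(G^{i_v})^{g(v)}$ and forcing all slots at $v$ to a common index $i_v$), each edge carries the constant matrix $R_1$, and each leg carries $\Id$ or a single psi-class times $R_1$ (respectively $R_1^t$ at leg $1$); no psi-classes sit at the nodes.

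Second, I would use the strict upper-triangularity of $R_1$ to bound the depth of the contributing trees. Because $c^{\triv,G}$ pins every slot of a vertex to one index and each edge inserts $R_1$, a nonzero term requires the vertex indices to be strictly monotonic along every path from the root; as these indices lie in $\{1,\dots,N\}$, every root-to-leaf path has at most $N-1$ edges, i.e. $\deg(T)\le N$.

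Third, I would feed this into the DR integral and count dimensions vertex by vertex. Using the splitting of the Hodge class on compact type, $\xi_T^*\lambda_g=\prod_v\lambda_{g(v)}$, together with the vanishing of $\lambda_{g(v)}$ off the compact-type locus and Hain's formula for the restriction of $\DR_g$, the integral factorizes into per-vertex integrals, each of the form $\int_{\oM_{g(v),n(v)}}\lambda_{g(v)}\,\DR_{g(v)}(\dots)\,(\text{psi-classes})$. The classes $\lambda_{g(v)}$ and $\DR_{g(v)}$ each have degree $g(v)$, so nonvanishing forces the total psi-degree at $v$ (beyond those absorbed into $\DR_{g(v)}$) to equal $g(v)+n(v)-3$. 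Those remaining psi-classes come only from $\psi_2^d$ (at one vertex) and from the leg insertions, of which there are at most $\ell_v$, the number of legs at $v$; writing $n(v)=h_v+\ell_v$ with $h_v$ the number of incident edges, this gives $g(v)+h_v-3\le d$, hence $g(v)\le d+3$ and $h_v\le d+3$ at every vertex. Depth at most $N$ and at most $d+3$ incident edges per vertex then bound $|V(T)|\le\sum_{k=0}^{N-1}(d+3)^k$, so $g=\sum_v g(v)\le(d+3)\,|V(T)|$ is bounded in terms of $N$ and $d$ only. Therefore only finitely many $g$ contribute to $P^\alpha_{\beta,d}$, which is consequently a polynomial in $\eps$.

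I expect the main obstacle to be the third step: justifying the vertex-by-vertex factorization of the DR integral and the attendant dimension count. This rests on the compact-type splitting $\xi_T^*\lambda_g=\prod_v\lambda_{g(v)}$, the vanishing of $\lambda_g$ away from compact type, and a clean description of $\xi_T^*\DR_g$ via Hain's formula that keeps each vertex contribution of degree exactly $g(v)$; care is needed to ensure that the restriction produces no extra free psi-classes at the nodes that would spoil the count. The depth bound of the second step also requires checking the index-flow conventions of the F-CohFT gluing, in particular the transpose at leg $1$, although the resulting bound $\deg(T)\le N$ is insensitive to the direction of the monotonicity.
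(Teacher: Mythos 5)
Your argument is correct and follows essentially the same route as the paper: collapse the edge factors to $R_1$ using $R_1^2=0$, use strict upper triangularity to force $\deg(T)\le N$, split the integral of $\lambda_g$ against the DR cycle over the vertices of the tree, and deduce $g(v)+|E[v]|\le 3+d$ by degree counting, whence a uniform bound on the total genus. The only difference is cosmetic: where the paper proves a sharp combinatorial lemma giving $\sum_v g(v)\le (d+3)2^{N-1}$, you settle for the cruder bound $g\le (d+3)\sum_{k=0}^{N-1}(d+3)^k$, which suffices for polynomiality in $\eps$.
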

\begin{proof}
Since $(\Id+R_1 z)^{-1}=\Id-R_1 z$, we have
\begin{gather}\label{eq:sum of trees for our family}
\left((\Id+R_1 z)c^{\triv,G}\right)_{g,n+1}=\sum_{T \in \ST_{g,n+1}}\xi_{T*}\left[\prod_{v\in V(T)}c^{\triv,G}_{g(v),n(v)}(\Id-R_1^t\psi_1)\prod_{k=2}^{n+1}(\Id-R_1\psi_k)\prod_{e\in E(T)}R_1\right].
\end{gather}
Note that in this formula each $c_{g(v),n(v)}$ is fed by a covector $e^{\alpha_1(v)}$ and vectors $e_{\alpha_2(v)},\ldots,e_{\alpha_{n(v)}(v)}$, and the result is zero unless $\alpha_1(v)=\ldots=\alpha_{n(v)}(v)$. Suppose that this equality is satisfied and denote $\alpha(v):=\alpha_i(v)$. Since the matrix $R_1$ is strictly upper triangular, we see that the result is zero unless $\alpha(v')>\alpha(v)$ if a vertex $v$ is the mother of a vertex $v'$. We conclude that only stable trees $T$ with $\deg(T)\le N$ can give a nontrivial contribution on the right-hand side of~\eqref{eq:sum of trees for our family}.

\medskip

Consider the DR hierarchy corresponding to the F-CohFT without unit $\left((\Id+R_1 z)c^{\triv,G}\right)_{g,n+1}$. The coefficients of a differential polynomial $P^\alpha_{\beta,d}$ are determined by the integrals 
$$
\int_{\DR_g(-\sum_{j=1}^n a_j,0,a_1,\ldots,a_n)}\lambda_g \psi_2^d \left((\Id+R_1 z)c^{\triv,G}\right)_{g,n+2}(e^\alpha\otimes e_\beta\otimes \otimes_{j=1}^n e_{\alpha_j}).
$$
The contribution of a stable tree $T\in\ST_{g,n+2}$ is given by 
$$
\int_{\DR_g(-\sum_{j=1}^n a_j,0,a_1,\ldots,a_n)}\lambda_g \psi_2^d\xi_{T*}\left[\prod_{v\in V(T)}c^{\triv,G}_{g(v),n(v)}(\Id-R_1^t\psi_1)\prod_{k=2}^{n+1}(\Id-R_1\psi_k)\prod_{e\in E(T)}R_1\right].
$$
Denote by $\tv$ the vertex of $T$ incident to the leg number $2$. Using a splitting property of the DR cycle (see, e.g.,~\cite[Proposition~4.6]{BR22}) and of the class $\lambda_g$, we see that this integral is equal to a linear combination of the products $\prod_{v\in V(T)}I_T(v)$ of integrals $I_T(v)$ of the form
$$
I_T(v)=
\begin{cases}
\int_{\DR_{g(v)}(b_1,\ldots,b_{n(v)})}\lambda_{g(v)}\prod_{i=1}^{n(v)}\psi_i^{l_i},&\text{if $v\ne\tv$},\\
\int_{\DR_{g(v)}(0,b_2,\ldots,b_{n(v)})}\lambda_{g(v)}\psi_1^d\prod_{i=1}^{n(v)}\psi_i^{l_i},&\text{if $v=\tv$},
\end{cases}
$$
where $0\le l_i\le 1$. Degree counting immediately gives that $I_T(v)=0$ unless $g(v)+\sum_{i=1}^{n(v)}(1-l_i)=3+\delta_{v,\tv}d$. This implies that $I_T(v)=0$ unless $g(v)+|E[v]|\le3+\delta_{v,\tv}d$.

\medskip

\begin{lemma}
Let $N\ge 2$. Consider a rooted tree $T$ (without half-edges) endowed with a function $g\colon V(T)\to\mbZ_{\ge 0}$ and a chosen vertex $\tv\in V(T)$ such that $\deg(T)\le N$ and for any $v\in V(T)$ we have $g(v)+|E[v]|\le 3+\delta_{v,\tv}d$. Then $\sum_{v\in V(T)}g(v)\le (d+3)2^{N-1}$.
\end{lemma}
\begin{proof}
Elementary combinatorial considerations show that a unique tree $\tT$ satisfying the conditions from the statement of the lemma and having the maximal total genus $\sum_{v\in V(T)}g(v)$ can be described as follows:
\begin{itemize}
\item $\deg(\tT)=N$;

\smallskip

\item the root of $\tT$ is of genus $0$ and has exactly $3+d$ direct descendents;

\smallskip

\item each vertex $v$ with $2\le l_{\tT}(v)\le N-1$ is of genus $0$ and has exactly $2$ direct descendents;

\smallskip

\item each vertex $v$ with $l_{\tT}(v)=N$ is of genus $2$;
\end{itemize}
and we have $\sum_{v\in V(\tT)}g(v)=(d+3)2^{N-1}$. Indeed, if a tree $T$ has $\deg(T)\le N-1$, then we can attach a new vertex of genus $2$ to a vertex of $T$ of maximal level, probably having to decrease its genus by $1$, but the total genus increases after that. Then, if $\deg(T)=N$ and there is a vertex $v\in V(T)$ with $l_T(v)\le N-1$ and $g(v)\ge 1$, then one can increase the total genus by decreasing the genus of $v$ by one and attaching a new vertex of genus $2$ to it. This shows that one should look for a tree with the maximal total genus only among the trees $T$ with $\deg(T)=N$ and such that $g(v)=0$ if $l_T(v)\le N-1$, and $g(v)=2$ if $l_T(v)=N$. Clearly, $\tT$ has the maximal number of vertices of level $N$ among such trees.
\end{proof}

\medskip

This lemma clearly completes the proof of the theorem.
\end{proof}

\medskip

%%%%%%%%%%%%%%%%%%%%%%%%%%%%%%%%%%%%%%%%%%%%%%%%%%%%%%%%%%%%%%%%%%%
%%%%%%%%%%%%%%%%%%%%%%%%%%%%%%%%%%%%%%%%%%%%%%%%%%%%%%%%%%%%%%%%%%%

\section{The primary flows in the case of rank $2$}

The goal of this section is to compute explicitly the primary flows $\frac{\d}{\d t^\alpha_0}$ of the DR hierarchy from Theorem~\ref{theorem:finite DR hierarchy} in the case $N=2$. So the matrix $R_1$ has the form
\begin{gather}\label{eq:R1 for N2}
R_1=
\begin{pmatrix}
0 & \xi \\ 
0 & 0
\end{pmatrix},\quad\xi\in\mbC.
\end{gather}

\medskip

\begin{theorem}
For $G=(G^1,G^2)\in\mbC^2$ and $R_1$ given by~\eqref{eq:R1 for N2}, consider the F-CohFT without unit $\left((\Id+R_1 z)c^{\triv,G}\right)_{g,n+1}$ and the corresponding DR hierarchy.
\begin{enumerate}
\item[1.] After the Miura transformation 
\begin{gather}\label{eq:special Miura transformation}
\tu^1=u^1+\xi\frac{(u^2)^2}{2}+\frac{\eps^2}{24}\d_x^2\left(\xi G^2 u^2+\frac{G^1}{1+\xi u^2}\right),\qquad \tu^2=u^2,
\end{gather}
the flows $\frac{\d}{\d t^1_0}$ and $\frac{\d}{\d t^2_0}$ of the DR hierarchy become
\begin{align*}
\frac{\d\tu^1}{\d t^1_0}=&\d_x\left[\frac{\tu^1}{1+\xi \tu^2}\right],\\
\frac{\d \tu^2}{\d t^1_0}=&0,\\
\frac{\d\tu^1}{\d t^2_0}=&\xi\d_x\left[\frac{\tu^1\tu^2}{1+\xi\tu^2}-\frac{1}{2}\frac{(\tu^1)^2}{(1+\xi\tu^2)^2}-\frac{\eps^2 G^1}{12}\left(\left(\left(\frac{\tu^1}{1+\xi\tu^2}\right)_x\frac{1}{1+\xi\tu^2}\right)_x\frac{1}{1+\xi\tu^2}\right)\right],\\
\frac{\d\tu^2}{\d t^2_0}=&\tu^2_x.
\end{align*}

\smallskip

\item[2.] Moreover, we have $\frac{\d\tu^2}{\d t^1_d}=0$ and $\frac{\d\tu^2}{\d t^2_d}=\left.\d_x P_d^\KdV\right|_{u_n=\tu^2_n,\,\eps\mapsto\sqrt{G^2}\eps}$.
\end{enumerate}
\end{theorem}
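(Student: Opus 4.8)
The plan is to compute the four differential polynomials $P^1_{1,0},P^2_{1,0},P^1_{2,0},P^2_{2,0}$ (and, for Part~2, $P^2_{\beta,d}$ for all $d$) directly from the definition in~\eqref{eq:DR hierarchy} using the tree expansion~\eqref{eq:sum of trees for our family}, and then to push the result through the Miura transformation~\eqref{eq:special Miura transformation} by the standard linearization rule $\frac{\d\tu^\alpha}{\d t^\beta_0}=\sum_{s\ge 0}\frac{\d\tu^\alpha}{\d u^\gamma_s}\,\d_x^{s+1}P^\gamma_{\beta,0}$. Since $\tu^2=u^2$, Part~2 is just the identity $\frac{\d u^2}{\d t^\beta_d}=\d_x P^2_{\beta,d}$ and needs no Miura bookkeeping. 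Throughout I would use, as in the proof of Theorem~\ref{theorem:finite DR hierarchy}, that the dual graph of a stable tree is of compact type, so the Hodge class splits cleanly as $\lambda_g|_{\xi_T}=\prod_v\lambda_{g(v)}$, together with the splitting of the DR cycle along tree boundary strata (\cite[Proposition~4.6]{BR22}); the vertex integrals $I_T(v)$ are then exactly those listed in the proof of Theorem~\ref{theorem:finite DR hierarchy}.

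For Part~2 the key observation is that the upper index $\alpha=2$ eliminates all but the single-vertex trees: the covector leg carries $(\Id-R_1^t\psi_1)e^2=e^2$ because $R_1^te^2=0$, so the root has index~$2$, which is maximal, and the index must strictly increase along every edge. For $\beta=1$ the vector leg~$2$ carries $(\Id-R_1\psi_2)e_1=e_1$, of index $1\ne 2$, so the vertex cannot be fed consistently and $P^2_{1,d}=0$, giving $\frac{\d\tu^2}{\d t^1_d}=0$. For $\beta=2$ only the index-$2$ configuration survives, in which every factor $R_1\psi$ is forced to its identity part; the theory restricts on the index-$2$ line to the one-dimensional F-TFT $c_{g,n}=(G^2)^g$. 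The weight $(G^2)^g$ multiplies $\eps^{2g}$ in~\eqref{eq:DR hierarchy}, so the computation is identical to Example~\ref{example:KdV} after $\eps\mapsto\sqrt{G^2}\eps$, giving $\frac{\d\tu^2}{\d t^2_d}=\left.\d_x P^\KdV_d\right|_{u_n=\tu^2_n,\,\eps\mapsto\sqrt{G^2}\eps}$ and in particular $\frac{\d\tu^2}{\d t^2_0}=\tu^2_x$.

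For Part~1 ($\alpha=1$) the covector leg carries $(\Id-R_1^t\psi_1)e^1=e^1-\xi\psi_1e^2$, so the root has index~$1$ (no $\psi_1$) or index~$2$ (with a factor $\xi\psi_1$, hence no descendants); consistency of leg~$2$ with the index of its vertex forces the surviving trees to be a single root, or a degree-$2$ tree with an index-$1$ root and index-$2$ leaves, with leg~$2$ sitting on the root (for $\beta=1$) or possibly on a leaf (for $\beta=2$, via the $e_2$ part of $(\Id-R_1\psi_2)e_2=e_2-\xi\psi_2 e_1$). I would then evaluate the vertex integrals via the polynomiality of $\int_{\DR_g}\lambda_g\psi^{\bullet}$ in the ramification data and classical low-genus $\lambda_g$-integrals, the crucial point being that $\lambda_g^2=0$ (and $\DR_g(0)$-type relations) makes the high-genus leaf integrals vanish, so that only $\eps^0$ and $\eps^2$ terms survive. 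The combinatorial sum over the numbers of index-$1$ and index-$2$ external legs on an index-$1$ vertex (contributing $u^1$ and $-\xi\psi\,u^2$ respectively) resums, through the genus-$0$ $\psi$-intersection numbers, into the rational factors $\tfrac1{1+\xi u^2}$ appearing in the statement, while the numerical constant and the weight $G^1$ come from a genus-$1$, index-$1$ vertex.

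Finally I would substitute the resulting $P^1_{1,0}$ and $P^1_{2,0}$ into~\eqref{eq:special Miura transformation}. For the $t^1_0$-flow, $\frac{\d u^2}{\d t^1_0}=0$ kills the entire $\eps^2$-tail of the Miura transformation, so $\frac{\d\tu^1}{\d t^1_0}=\d_x P^1_{1,0}=\d_x\!\big[\tfrac{\tu^1}{1+\xi\tu^2}\big]$; for the $t^2_0$-flow, $\frac{\d u^2}{\d t^2_0}=u^2_x$ feeds the $\eps^2$-tail and produces the dispersive correction in the statement. I expect the main obstacle to be the degree-$2$ contributions to $P^1_{2,0}$: there leg~$2$ sits on an index-$2$ leaf while the genus sits on the index-$1$ root, so one must carry the off-diagonal $\xi$-factors and the node $\psi$-classes from the DR-cycle splitting through a genuine genus interaction, and check that the accumulated $\lambda_g$-Hodge integrals combine with the $\eps^2$-tail of the Miura transformation into the single term $-\tfrac{\eps^2 G^1}{12}\big(\big(\big(\tfrac{\tu^1}{1+\xi\tu^2}\big)_x\tfrac1{1+\xi\tu^2}\big)_x\tfrac1{1+\xi\tu^2}\big)$, with all would-be higher-genus ($\eps^{>2}$) contributions cancelling.
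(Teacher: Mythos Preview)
Your overall strategy---tree expansion~\eqref{eq:sum of trees for our family}, index-monotonicity to enumerate contributing trees, DR/Hodge splitting along compact-type strata to reduce to vertex integrals, then pushing through the Miura transformation---is exactly the paper's approach, and your treatment of Part~2 and of the $t^1_0$-flow is correct.

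The gap is in your handling of the $\eps^4$ part of $P^1_{2,0}$. Your sentence ``$\lambda_g^2=0$ (and $\DR_g(0)$-type relations) makes the high-genus leaf integrals vanish, so that only $\eps^0$ and $\eps^2$ terms survive'' is not right: the relation $\DR_g(0,\ldots,0)=(-1)^g\lambda_g$ together with $\lambda_g^2=0$ only kills the trees where a \emph{positive-genus leaf carries leg~$2$} (the leg of multiplicity~$0$). It does \emph{not} kill, for instance, the single index-$1$ vertex of genus~$2$, nor the tree with a genus-$1$ index-$1$ root and a genus-$1$ index-$2$ leaf (leg~$2$ on the root), nor the genus-$0$ root with two genus-$1$ leaves (leg~$2$ on the root). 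These give nonzero $\eps^4$ contributions to $P^1_{2,0}$; in particular the single genus-$2$ vertex requires the integral
\[
\int_{\DR_2(-\sum a_i,a_1,\ldots,a_n)}\lambda_2\,\psi_2\cdots\psi_{n+1},
\]
which is not a ``classical low-genus $\lambda_g$-integral''. The paper computes it by a dilaton-equation recursion reducing to the case $n=4$ and then evaluating that case via the $\psi$--DR intersection formula of~\cite{BSSZ15} (with computer assistance). Only after all these $\eps^4$ pieces are in hand does one see---again by direct computation, not by an a priori argument---that they cancel against the $\eps^2$-tail of the Miura transformation applied to the $\eps^2$ part of $P^1_{2,0}$ and to $\d_x P^2_{2,0}=u^2_{xx}$. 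Your final paragraph correctly flags this cancellation as the hard point, but the proposal gives no mechanism for producing the genus-$2$ numbers that feed into it; that is the missing ingredient.
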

\begin{proof}
The differential polynomials $P^\alpha_{\beta,d}$ of our DR hierarchy are given by the integrals
\begin{gather}\label{eq:DR integral for N2}
\int_{\DR_g(-\sum_{j=1}^n a_j,0,a_1,\ldots,a_n)}\lambda_g\psi_2^d\left((\Id+R_1 z)c^{\triv,G}\right)_{g,n+2}(e^\alpha\otimes e_\beta\otimes \otimes_{j=1}^n e_{\alpha_j}).
\end{gather}
Using formula~\eqref{eq:sum of trees for our family}, we express the class $\left((\Id+R_1 z)c^{\triv,G}\right)_{g,n+2}(e^\alpha\otimes e_\beta\otimes \otimes_{j=1}^n e_{\alpha_j})$ as a sum over stable trees $T$ from $\ST_{g,n+2}$. As we already explained in the proof of Theorem~\ref{theorem:finite DR hierarchy}, in formula~\eqref{eq:sum of trees for our family}, at each vertex $v$ of a stable tree $T$, the class $c_{g(v),n(v)}$ is fed by a covector $e^{\alpha_1(v)}$ and vectors $e_{\alpha_2(v)},\ldots,e_{\alpha_{n(v)}(v)}$, and the result is zero unless $\alpha_1(v)=\ldots=\alpha_{n(v)}(v)$. Moreover, in order to get a nontrivial result, we should necessarily have $\alpha(v)<\alpha(v')$ for any two vertices~$v$ and~$v'$ with $l_T(v)<l_T(v')$.

\medskip

\noindent\underline{\it Computation of the differential polynomials $P^2_{\beta,d}$}. In the computation of a differential polynomial $P^2_{\beta,d}$, at the root $v_0$ of $T$ the class $c_{g(v_0),n(v_0)}$ is fed by the covector $(\Id-R_1^t\psi_1)e^2=e^2$. If $\beta=1$, then at some vertex $v$ the class $c_{g(v),n(v)}$ is fed by the vector $(\Id-R_1\psi_2)e_1=e_1$, which implies that the result is zero. Therefore, $P^2_{1,d}=0$. If $\beta=2$, then at some vertex $v$ the class~$c_{g(v),n(v)}$ is fed by the vector $(\Id-R_1\psi_2)e_2=e_2-\xi\psi_2 e_1$. Feeding by $e_1$, as in the case $\beta=1$, immediately gives zero. Therefore, $c_{g(v),n(v)}$ must be fed by~$e_2$, which immediately implies that if the result is nonzero, then $T$ consists of just one vertex and, thus,
\begin{align*}
&\int_{\DR_g(-\sum_{j=1}^n a_j,0,a_1,\ldots,a_n)} \lambda_g\psi_2^d \left((\Id+R_1 z)c^{\triv,G}\right)_{g,n+2}(e^2\otimes e_2\otimes \otimes_{j=1}^n e_{\alpha_j})=\\
=&\int_{\DR_g(-\sum_{j=1}^n a_j,0,a_1,\ldots,a_n)} \lambda_g\psi_2^d c^{\triv,G}_{g,n+2}(e^2\otimes e_2\otimes \otimes_{j=1}^n e_{\alpha_j})=\\
=&(G^2)^g\int_{\DR_g(-\sum_{j=1}^n a_j,0,a_1,\ldots,a_n)} \lambda_g\psi_2^d.
\end{align*}
This shows that $P^2_{2,d}=\left.P_d^\KdV\right|_{u_n=u^2_n,\,\eps\mapsto\sqrt{G^2}\eps}$ and completes the proof of Part 2 of the proposition.
 
\medskip

\noindent\underline{\it Computation of the differential polynomial $P^1_{1,0}$}. Using arguments analogous to the ones from above, and also the degree counting argument from the proof of Theorem~\ref{theorem:finite DR hierarchy}, it is easy to see that the stable trees shown on Figure~\ref{figure:P11 graphs} are the only stable trees contributing to the integral~\eqref{eq:DR integral for N2} with $\alpha=\beta=1$ and $d=0$. On the figure, we put in a box the vector corresponding to the second marked point, which also corresponds to the point of multiplicity zero in the double ramification cycle. 
\begin{figure}[t]
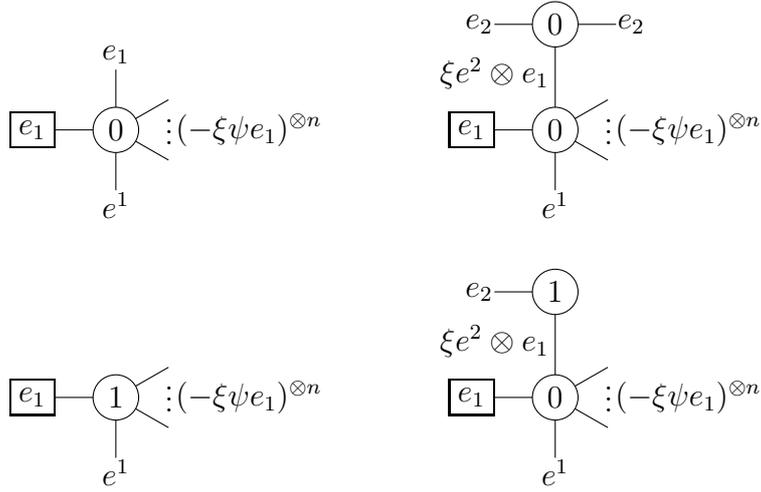

\tikz{
\leg{A}{30};\leg{A}{-30};\leg{A}{-90};\leg{A}{180};\leg{A}{90};\gg{0}{A};\lab{A}{0}{7mm}{\raisebox{0.3\height}{\vdots}};\lab{A}{0}{17.5mm}{\raisebox{0.3\height}{$(-\xi \psi e_1)^{\otimes n}$}};\lab{A}{-90}{10mm}{e^1};\lab{A}{90}{10mm}{e_1};\lab{A}{180}{11mm}{\boxed{e_1}};
}
\hspace{1cm}
\tikz{
\draw (A)--(B);\leg{A}{30};\leg{A}{-30};\leg{A}{-90};\leg{A}{180};\leg{B}{0};\leg{B}{180};\gg{0}{A};\gg{0}{B};\lab{A}{0}{7mm}{\raisebox{0.3\height}{\vdots}};\lab{A}{0}{17.5mm}{\raisebox{0.3\height}{$(-\xi \psi e_1)^{\otimes n}$}};\lab{A}{-90}{10mm}{e^1};\lab{B}{0}{10mm}{e_2};\lab{B}{180}{10mm}{e_2};\lab{A}{180}{11mm}{\boxed{e_1}};\node at (-8mm,7.5mm) {$\xi e^2\otimes e_1$};
}
\\[0.5cm]
\tikz{
\leg{A}{30};\leg{A}{-30};\leg{A}{-90};\leg{A}{180};\gg{1}{A};\lab{A}{0}{7mm}{\raisebox{0.3\height}{\vdots}};\lab{A}{0}{17.5mm}{\raisebox{0.3\height}{$(-\xi \psi e_1)^{\otimes n}$}};\lab{A}{-90}{10mm}{e^1};\lab{A}{180}{11mm}{\boxed{e_1}};
}
\hspace{1cm}
\tikz{
\draw (A)--(B);\leg{A}{30};\leg{A}{-30};\leg{A}{-90};\leg{A}{180};\leg{B}{180};\gg{0}{A};\gg{1}{B};\lab{A}{0}{7mm}{\raisebox{0.3\height}{\vdots}};\lab{A}{0}{17.5mm}{\raisebox{0.3\height}{$(-\xi \psi e_1)^{\otimes n}$}};\lab{A}{-90}{10mm}{e^1};\lab{B}{180}{10mm}{e_2};\lab{A}{180}{11mm}{\boxed{e_1}};\node at (-8mm,7.5mm) {$\xi e^2\otimes e_1$};
}
\caption{Stable trees contributing to $P^1_{1,0}$}
\label{figure:P11 graphs}
\end{figure}

\medskip

Consider the contribution of the trees of genus $0$. Note that $\int_{\oM_{0,n+3}}\psi_1\psi_2\cdots\psi_n=n!$. The first tree of genus $0$ on Figure~\ref{figure:P11 graphs} contributes to the integral
$$
\int_{\DR_0(-\sum_{j=1}^{n+1}a_j,0,a_1,\ldots,a_{n+1})}\left((\Id+R_1 z)c^{\triv,G}\right)_{0,n+3}(e^1\otimes e_1^{\otimes 2}\otimes e_2^{\otimes n})
$$
as $(-\xi)^n n!$. The second tree of genus $0$ on Figure~\ref{figure:P11 graphs} contributes to the integral
$$
\int_{\DR_0(-\sum_{j=1}^{n+2} a_j,0,a_1,\ldots,a_{n+2})}\left((\Id+R_1 z)c^{\triv,G}\right)_{0,n+3}(e^1\otimes e_1\otimes e_2^{\otimes (n+2)})
$$
as $(-1)^n\xi^{n+1}\frac{(n+2)!}{2}$. Therefore,
$$
\Coef_{\eps^0}P^1_{1,0}=\frac{u^1+\frac{\xi(u^2)^2}{2}}{1+\xi u^2}.
$$

\medskip

Consider now the contribution of the trees of genus $1$ from Figure~\ref{figure:P11 graphs}.  The first tree of genus~$1$ on Figure~\ref{figure:P11 graphs} contributes to the integral
$$
\int_{\DR_1(-\sum_{j=1}^n a_j,0,a_1,\ldots,a_n)}\lambda_1\left((\Id+R_1 z)c^{\triv,G}\right)_{0,n+2}(e^1\otimes e_1\otimes e_2^{\otimes n})
$$
as
$$
G^1(-\xi)^n\underbrace{\int_{\DR_1(-\sum_{j=1}^n a_j,0,a_1,\ldots,a_n)}\lambda_1\psi_3\cdots\psi_{n+2}}_{P_n(a_1,\ldots,a_n):=}.
$$

\medskip

Note that $P_n$ is a symmetric polynomial in $a_1,\ldots,a_n$ of degree $2$ satisfying
\begin{align*}
P_n|_{a_n=0}=&\int_{\DR_1(-\sum_{j=1}^{n-1} a_j,0,a_1,\ldots,a_{n-1},0)}\lambda_1\psi_3\cdots\psi_{n+2}=\\
=&(n+1)\int_{\DR_1(-\sum_{j=1}^{n-1} a_j,0,a_1,\ldots,a_{n-1})}\lambda_1\psi_3\cdots\psi_{n+1}=\\\
=&(n+1)P_{n-1}.
\end{align*}
Therefore,
$$
\left.\frac{P_n}{(n+1!)}\right|_{a_n=0}=\frac{P_{n-1}}{n!},
$$
which implies that there exist constants $\alpha$ and $\beta$ such that
$$
\frac{P_n}{(n+1!)}=\alpha m_{(2)}(a_1,\ldots,a_n)+\beta m_{(1,1)}(a_1,\ldots,a_n),
$$
where, for a partition $\lambda$, we denote by $m_\lambda(a_1,\ldots,a_n)$ the monomial symmetric function (see, e.g.,~\cite[Section~2]{Mac95}). In order to determine $\alpha$ and $\beta$, it is sufficient to compute $P_2$. We have
$$
P_2=\int_{\DR_1(-a_1-a_2,0,a_1,a_2)}\lambda_1\psi_3\psi_4=\int_{\DR_1(-a_1-a_2,a_1,a_2)}\lambda_1(\psi_2+\psi_3).
$$
Using that (see, e.g.,~\cite[Section 4.1]{BR16})
$$
\int_{\DR_1(-b_1-b_2,b_1,b_2)}\lambda_1\psi_1=\frac{b_1^2+b_2^2}{24},
$$
we obtain
$$
P_2=\frac{1}{24}(2(a_1+a_2)^2+a_1^2+a_2^2)=\frac{1}{24}(3(a_1^2+a_2^2)+4a_1a_2).
$$
Therefore, $\alpha=\frac{1}{48}$, $\beta=\frac{1}{36}$, and
\begin{gather}\label{eq:formula for Pn}
P_n=\frac{(n+1)!}{48}m_{(2)}+\frac{(n+1)!}{36}m_{(1,1)}.
\end{gather}

\medskip

Thus, the first tree of genus $1$ on Figure~\ref{figure:P11 graphs} gives the following contribution to the differential polynomial $P^1_{1,0}$:
\begin{align*}
&\eps^2 G^1\sum_{n\ge 1}(-\xi)^n\frac{n(n+1)}{48}u^2_{xx}(u^2)^{n-1}+\eps^2 G^1\sum_{n\ge 2}(-\xi)^n\frac{(n-1)n(n+1)}{72}(u^2_x)^2(u^2)^{n-2}=\\
=&-\eps^2\frac{\xi G^1}{24}\frac{u^2_{xx}}{(1+\xi u^2)^3}+\eps^2 \frac{\xi^2 G^1}{12}\frac{(u^2_x)^2}{(1+\xi u^2)^4}.
\end{align*}

\medskip

The second tree of genus~$1$ on Figure~\ref{figure:P11 graphs} contributes to the integral
$$
\int_{\DR_1(-\sum_{j=1}^{n+1} a_j,0,a_1,\ldots,a_{n+1})}\lambda_1\left((\Id+R_1 z)c^{\triv,G}\right)_{0,n+3}(e^1\otimes e_1\otimes e_2^{\otimes (n+1)})
$$
as
$$
G^2(-1)^n\xi^{n+1}n!\sum_{j=1}^{n+1}\int_{\DR_1(-a_j,a_j)}\lambda_1=\frac{G^2(-1)^n\xi^{n+1}n!}{24}m_{(2)}(a_1,\ldots,a_{n+1}),
$$
which gives the following contribution to the differential polynomial $P^1_{1,0}$:
$$
\eps^2\frac{\xi G^2}{24}\frac{u^2_{xx}}{1+\xi u^2}.
$$
This completes the computation of the differential polynomial $P^1_{1,0}$. It is clear that after the Miura transformation~\eqref{eq:special Miura transformation} we obtain the formulas for the flow~$\frac{\d}{\d t^1_0}$ in the variables~$\tu^1,\tu^2$ from the statement of the theorem.

\medskip

\noindent\underline{\it Computation of the differential polynomial $P^1_{2,0}$}. All the stable trees of genus $0$ contributing to the differential polynomial $P^1_{2,0}$ are shown on Figure~\ref{figure:P12 genus0 graphs},
\begin{figure}[t]
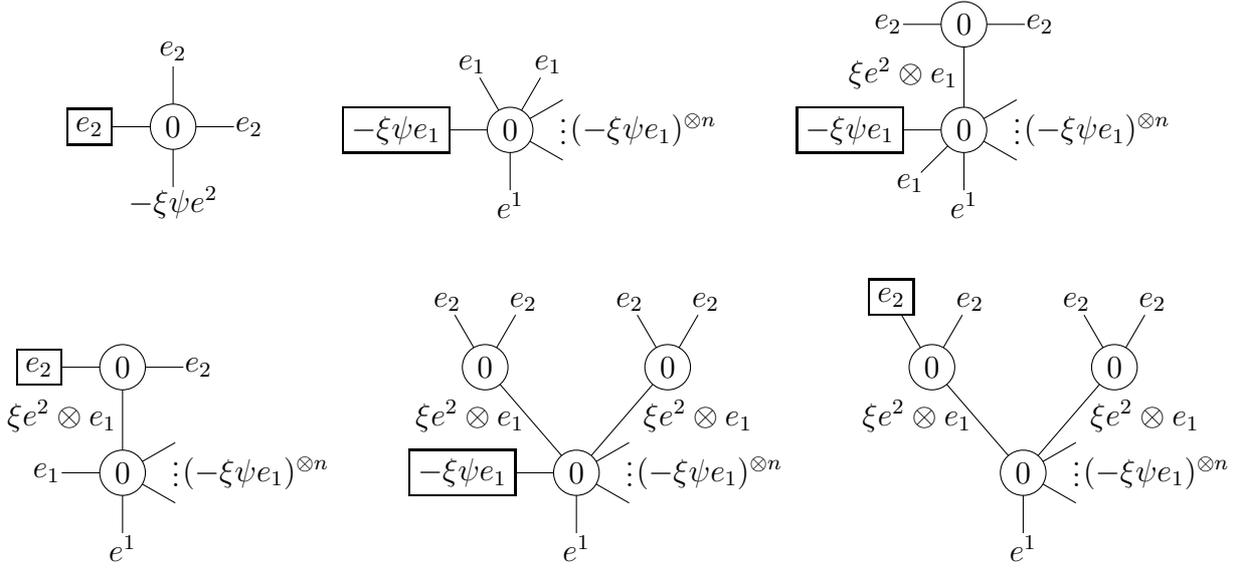

\tikz{
\leg{A}{0};\leg{A}{-90};\leg{A}{180};\leg{A}{90};\gg{0}{A};\lab{A}{0}{10mm}{e_2};\lab{A}{-90}{10mm}{-\xi\psi e^2};\lab{A}{90}{10mm}{e_2};\lab{A}{180}{11mm}{\boxed{e_2}};
}
\hspace{0.5cm}
\tikz{
\leg{A}{120};\leg{A}{60};\leg{A}{30};\leg{A}{-30};\leg{A}{-90};\leg{A}{180};\gg{0}{A};\lab{A}{0}{7mm}{\raisebox{0.3\height}{\vdots}};\lab{A}{0}{17.5mm}{\raisebox{0.3\height}{$(-\xi \psi e_1)^{\otimes n}$}};\lab{A}{-90}{10mm}{e^1};\lab{A}{180}{15mm}{\boxed{-\xi\psi e_1}};\lab{A}{60}{10mm}{e_1};\lab{A}{120}{10mm}{e_1};
}
\hspace{0.5cm}
\tikz{
\draw (A)--(B);\leg{A}{30};\leg{A}{-135};\leg{A}{-30};\leg{A}{-90};\leg{A}{180};\leg{B}{0};\leg{B}{180};\gg{0}{A};\gg{0}{B};\lab{A}{0}{7mm}{\raisebox{0.3\height}{\vdots}};\lab{A}{0}{17.5mm}{\raisebox{0.3\height}{$(-\xi \psi e_1)^{\otimes n}$}};\lab{A}{-90}{10mm}{e^1};\lab{B}{0}{10mm}{e_2};\lab{B}{180}{10mm}{e_2};\lab{A}{-135}{10mm}{e_1};\lab{A}{180}{15mm}{\boxed{-\xi\psi e_1}};\node at (-8mm,7.5mm) {$\xi e^2\otimes e_1$};
}
\\[0.5cm]
\tikz{
\draw (A)--(B);\leg{A}{30};\leg{A}{180};\leg{A}{-30};\leg{A}{-90};\leg{B}{0};\leg{B}{180};\gg{0}{A};\gg{0}{B};\lab{A}{0}{7mm}{\raisebox{0.3\height}{\vdots}};\lab{A}{0}{17.5mm}{\raisebox{0.3\height}{$(-\xi \psi e_1)^{\otimes n}$}};\lab{A}{-90}{10mm}{e^1};\lab{B}{0}{10mm}{e_2};\lab{A}{180}{10mm}{e_1};\lab{B}{180}{11mm}{\boxed{e_2}};\node at (-8mm,7.5mm) {$\xi e^2\otimes e_1$};
}
\hspace{0.5cm}
\tikz{
\draw (A)--(BL);\draw (A)--(BR);\leg{A}{30};\leg{A}{180};\leg{A}{-30};\leg{A}{-90};\leg{BL}{120};\leg{BL}{60};\leg{BR}{120};\leg{BR}{60};\gg{0}{A};\gg{0}{BL};\gg{0}{BR};\lab{A}{0}{7mm}{\raisebox{0.3\height}{\vdots}};\lab{A}{0}{17.5mm}{\raisebox{0.3\height}{$(-\xi \psi e_1)^{\otimes n}$}};\lab{A}{-90}{10mm}{e^1};\lab{BL}{60}{10mm}{e_2};\lab{BL}{120}{10mm}{e_2};\lab{BR}{60}{10mm}{e_2};\lab{BR}{120}{10mm}{e_2};\lab{A}{180}{15mm}{\boxed{-\xi\psi e_1}};\node at (-14mm,7.5mm) {$\xi e^2\otimes e_1$};\node at (16mm,7.5mm) {$\xi e^2\otimes e_1$};
}
\hspace{0.5cm}
\tikz{
\draw (A)--(BL);\draw (A)--(BR);\leg{A}{30};\leg{A}{-30};\leg{A}{-90};\leg{BL}{120};\leg{BL}{60};\leg{BR}{120};\leg{BR}{60};\gg{0}{A};\gg{0}{BL};\gg{0}{BR};\lab{A}{0}{7mm}{\raisebox{0.3\height}{\vdots}};\lab{A}{0}{17.5mm}{\raisebox{0.3\height}{$(-\xi \psi e_1)^{\otimes n}$}};\lab{A}{-90}{10mm}{e^1};\lab{BL}{60}{10mm}{e_2};\lab{BL}{120}{10.7mm}{\boxed{e_2}};\lab{BR}{60}{10mm}{e_2};\lab{BR}{120}{10mm}{e_2};\node at (-14mm,7.5mm) {$\xi e^2\otimes e_1$};\node at (16mm,7.5mm) {$\xi e^2\otimes e_1$};
}
\caption{Stable trees of genus $0$ contributing to $P^1_{2,0}$}
\label{figure:P12 genus0 graphs}
\end{figure}
and their contributions are the following:
\begin{align*}
& -\xi\frac{(u^2)^2}{2}, && -\xi\frac{(u^1)^2}{2}\frac{1}{(1+\xi u^2)^2} && -\xi^2\frac{u^1(u^2)^2}{2}\frac{1}{(1+\xi u^2)^2},\\
& \xi u^1 u^2\frac{1}{1+\xi u^2}, && -\xi^3\frac{(u^2)^4}{8}\frac{1}{(1+\xi u^2)^2}, && \xi^2\frac{(u^2)^3}{2}\frac{1}{1+\xi u^2}.
\end{align*}

\medskip

All the stable trees of genus $1$ contributing to the differential polynomial $P^1_{2,0}$ are shown on Figure~\ref{figure:P12 genus1 graphs}. 
\begin{figure}[t]
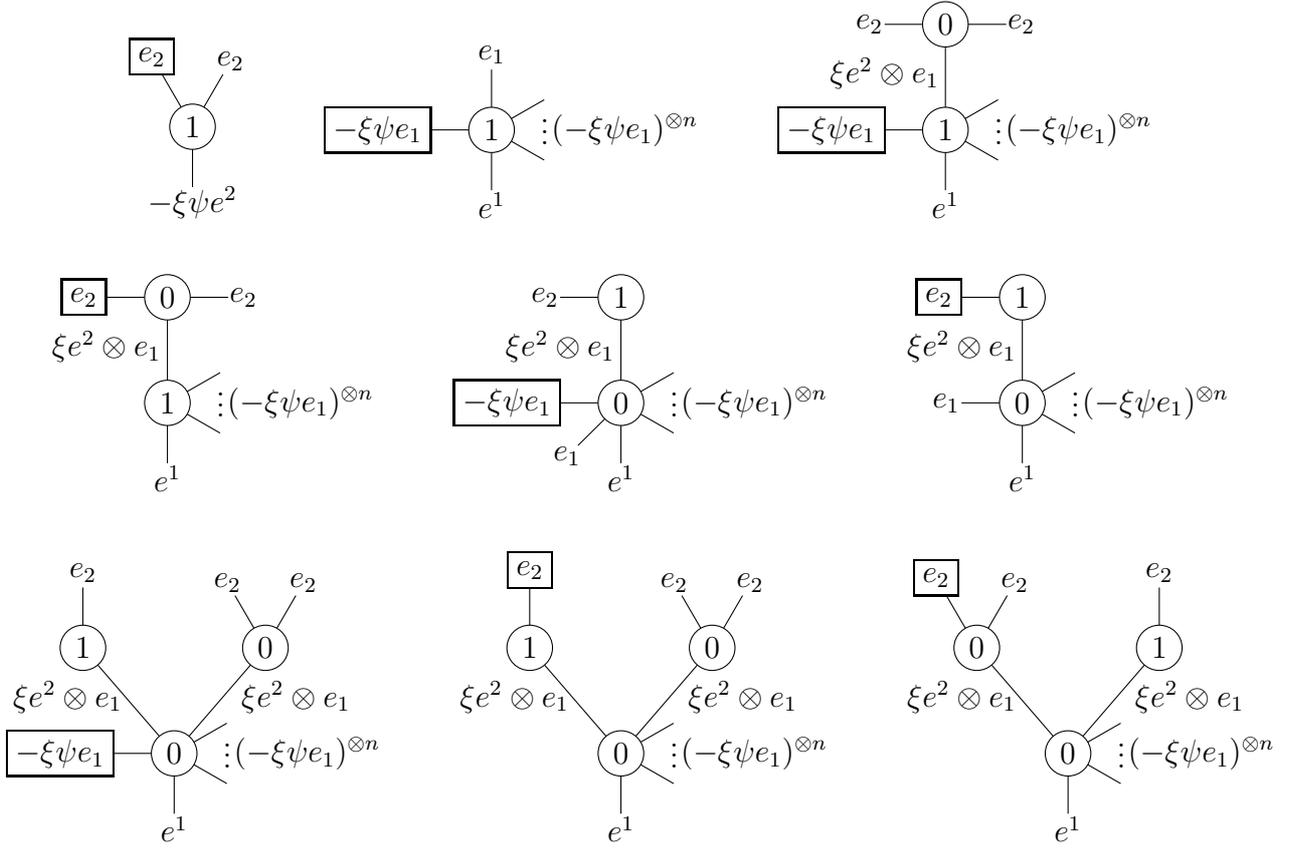

\tikz{
\leg{A}{-90};\leg{A}{120};\leg{A}{60};\gg{1}{A};\lab{A}{-90}{10mm}{-\xi\psi e^2};\lab{A}{60}{10mm}{e_2};\lab{A}{120}{10.7mm}{\boxed{e_2}};
}
\hspace{0.5cm}
\tikz{
\leg{A}{90};\leg{A}{30};\leg{A}{-30};\leg{A}{-90};\leg{A}{180};\gg{1}{A};\lab{A}{0}{7mm}{\raisebox{0.3\height}{\vdots}};\lab{A}{0}{17.5mm}{\raisebox{0.3\height}{$(-\xi \psi e_1)^{\otimes n}$}};\lab{A}{-90}{10mm}{e^1};\lab{A}{180}{15mm}{\boxed{-\xi\psi e_1}};\lab{A}{90}{10mm}{e_1};
}
\hspace{0.5cm}
\tikz{
\draw (A)--(B);\leg{A}{30};\leg{A}{-30};\leg{A}{-90};\leg{A}{180};\leg{B}{0};\leg{B}{180};\gg{1}{A};\gg{0}{B};\lab{A}{0}{7mm}{\raisebox{0.3\height}{\vdots}};\lab{A}{0}{17.5mm}{\raisebox{0.3\height}{$(-\xi \psi e_1)^{\otimes n}$}};\lab{A}{-90}{10mm}{e^1};\lab{B}{0}{10mm}{e_2};\lab{B}{180}{10mm}{e_2};\lab{A}{180}{15mm}{\boxed{-\xi\psi e_1}};\node at (-8mm,7.5mm) {$\xi e^2\otimes e_1$};
}
\\[0.5cm]
\tikz{
\draw (A)--(B);\leg{A}{30};\leg{A}{-30};\leg{A}{-90};\leg{B}{0};\leg{B}{180};\gg{1}{A};\gg{0}{B};\lab{A}{0}{7mm}{\raisebox{0.3\height}{\vdots}};\lab{A}{0}{17.5mm}{\raisebox{0.3\height}{$(-\xi \psi e_1)^{\otimes n}$}};\lab{A}{-90}{10mm}{e^1};\lab{B}{0}{10mm}{e_2};\lab{B}{180}{11mm}{\boxed{e_2}};\node at (-8mm,7.5mm) {$\xi e^2\otimes e_1$};
}
\hspace{0.5cm}
\tikz{
\draw (A)--(B);\leg{A}{30};\leg{A}{-135};\leg{A}{-30};\leg{A}{-90};\leg{A}{180};\leg{B}{180};\gg{0}{A};\gg{1}{B};\lab{A}{0}{7mm}{\raisebox{0.3\height}{\vdots}};\lab{A}{0}{17.5mm}{\raisebox{0.3\height}{$(-\xi \psi e_1)^{\otimes n}$}};\lab{A}{-90}{10mm}{e^1};\lab{B}{180}{10mm}{e_2};\lab{A}{-135}{10mm}{e_1};\lab{A}{180}{15mm}{\boxed{-\xi\psi e_1}};\node at (-8mm,7.5mm) {$\xi e^2\otimes e_1$};
}
\hspace{0.5cm}
\tikz{
\draw (A)--(B);\leg{A}{30};\leg{A}{180};\leg{A}{-30};\leg{A}{-90};\leg{B}{180};\gg{0}{A};\gg{1}{B};\lab{A}{0}{7mm}{\raisebox{0.3\height}{\vdots}};\lab{A}{0}{17.5mm}{\raisebox{0.3\height}{$(-\xi \psi e_1)^{\otimes n}$}};\lab{A}{-90}{10mm}{e^1};\lab{A}{180}{10mm}{e_1};\lab{B}{180}{11mm}{\boxed{e_2}};\node at (-8mm,7.5mm) {$\xi e^2\otimes e_1$};
}
\\[0.5cm]
\tikz{
\draw (A)--(BL);\draw (A)--(BR);\leg{A}{30};\leg{A}{180};\leg{A}{-30};\leg{A}{-90};\leg{BL}{90};\leg{BR}{120};\leg{BR}{60};\gg{0}{A};\gg{1}{BL};\gg{0}{BR};\lab{A}{0}{7mm}{\raisebox{0.3\height}{\vdots}};\lab{A}{0}{17.5mm}{\raisebox{0.3\height}{$(-\xi \psi e_1)^{\otimes n}$}};\lab{A}{-90}{10mm}{e^1};\lab{BL}{90}{10mm}{e_2};\lab{BR}{60}{10mm}{e_2};\lab{BR}{120}{10mm}{e_2};\lab{A}{180}{15mm}{\boxed{-\xi\psi e_1}};\node at (-14mm,7.5mm) {$\xi e^2\otimes e_1$};\node at (16mm,7.5mm) {$\xi e^2\otimes e_1$};
}
\hspace{0.5cm}
\tikz{
\draw (A)--(BL);\draw (A)--(BR);\leg{A}{30};\leg{A}{-30};\leg{A}{-90};\leg{BL}{90};\leg{BR}{120};\leg{BR}{60};\gg{0}{A};\gg{1}{BL};\gg{0}{BR};\lab{A}{0}{7mm}{\raisebox{0.3\height}{\vdots}};\lab{A}{0}{17.5mm}{\raisebox{0.3\height}{$(-\xi \psi e_1)^{\otimes n}$}};\lab{A}{-90}{10mm}{e^1};\lab{BL}{90}{10.3mm}{\boxed{e_2}};\lab{BR}{60}{10mm}{e_2};\lab{BR}{120}{10mm}{e_2};\node at (-14mm,7.5mm) {$\xi e^2\otimes e_1$};\node at (16mm,7.5mm) {$\xi e^2\otimes e_1$};
}
\hspace{0.5cm}
\tikz{
\draw (A)--(BL);\draw (A)--(BR);\leg{A}{30};\leg{A}{-30};\leg{A}{-90};\leg{BL}{120};\leg{BL}{60};\leg{BR}{90};\gg{0}{A};\gg{0}{BL};\gg{1}{BR};\lab{A}{0}{7mm}{\raisebox{0.3\height}{\vdots}};\lab{A}{0}{17.5mm}{\raisebox{0.3\height}{$(-\xi \psi e_1)^{\otimes n}$}};\lab{A}{-90}{10mm}{e^1};\lab{BL}{60}{10mm}{e_2};\lab{BL}{120}{10.7mm}{\boxed{e_2}};\lab{BR}{90}{10mm}{e_2};\node at (-14mm,7.5mm) {$\xi e^2\otimes e_1$};\node at (16mm,7.5mm) {$\xi e^2\otimes e_1$};
}
\caption{Stable trees of genus $1$ contributing to $P^1_{2,0}$.}
\label{figure:P12 genus1 graphs}
\end{figure}
In order to compute the contributions of stable trees number 2, 3, and 4, we have to compute the integral
\begin{gather}\label{eq:genus 1 integral}
\int_{\DR_1(-\sum_{j=1}^{n+1} a_j,a_{n+1},a_1,\ldots,a_n)}\lambda_1\psi_3\cdots\psi_{n+2}.
\end{gather}
When $a_{n+1}=0$, we have already done that above, see equation~\eqref{eq:formula for Pn}. A computation in the case of arbitrary $a_{n+1}$ is analogous, and we obtain
$$
\int_{\DR_1(-\sum_{j=1}^{n+1} a_j,a_{n+1},a_1,\ldots,a_n)}\lambda_1\psi_3\cdots\psi_{n+2}=(n+1)!\left(\frac{1}{48}m_{(2)}+\frac{1}{36}m_{(1,1)}+\frac{a_{n+1}}{24}m_{(1)}+\frac{a_{n+1}^2}{24}\right),
$$
where $m_\lambda=m_\lambda(a_1,\ldots,a_n)$. As a result, the contributions of the stable trees on Figure~\ref{figure:P12 genus1 graphs} to $\Coef_{\eps^2}P^1_{2,0}$ are the following: 
\begin{align*}
\text{\small tree number 1:}\hspace{0.3cm}&-\frac{\xi G^2}{24}u^2_{xx} \\
\text{\small trees number 2 and 3:}\hspace{0.3cm}&\frac{\xi^2 G^1}{8}\frac{\left(u^1+\frac{\xi(u^2)^2}{2}\right) u^2_{xx}}{(1+\xi u^2)^4}-\frac{\xi^3 G^1}{3}\frac{\left(u^1+\frac{\xi(u^2)^2}{2}\right)(u^2_x)^2}{(1+\xi u^2)^5}\\
&+\frac{\xi^2 G^1}{4}\frac{\left(u^1+\frac{\xi(u^2)^2}{2}\right)_x u^2_x}{(1+\xi u^2)^4}-\frac{\xi G^1}{12}\frac{\left(u^1+\frac{\xi(u^2)^2}{2}\right)_{xx}}{(1+\xi u^2)^3}\\
\text{\small tree number 4:}\hspace{0.3cm}&\frac{\xi G^1}{24}\frac{u^2_{xx}}{(1+\xi u^2)^3}-\frac{\xi^2 G^1}{12}\frac{(u^2_x)^2}{(1+\xi u^2)^4}\\
\text{\small trees number 5 and 7:}\hspace{0.3cm}& -\frac{\xi^2 G^2}{24}\frac{\left(u^1+\frac{\xi(u^2)^2}{2}\right)u^2_{xx}}{(1+\xi u^2)^2}\\
\text{\small trees number 6 and 8:}\hspace{0.3cm}& 0 \\
\text{\small tree number 9:}\hspace{0.3cm}& \frac{\xi^2 G^2}{24}\frac{u^2 u^2_{xx}}{1+\xi u^2}
\end{align*}
where for the trees number 6 and 8 we get zero, because $\DR_g(0,\ldots,0)=(-1)^g\lambda_g$ and $\lambda_g^2=0$ for $g\ge 1$.

\medskip

All the stable trees of genus $2$ contributing to the differential polynomial $P^1_{2,0}$ are shown on Figure~\ref{figure:P12 genus2 graphs},
\begin{figure}[t]
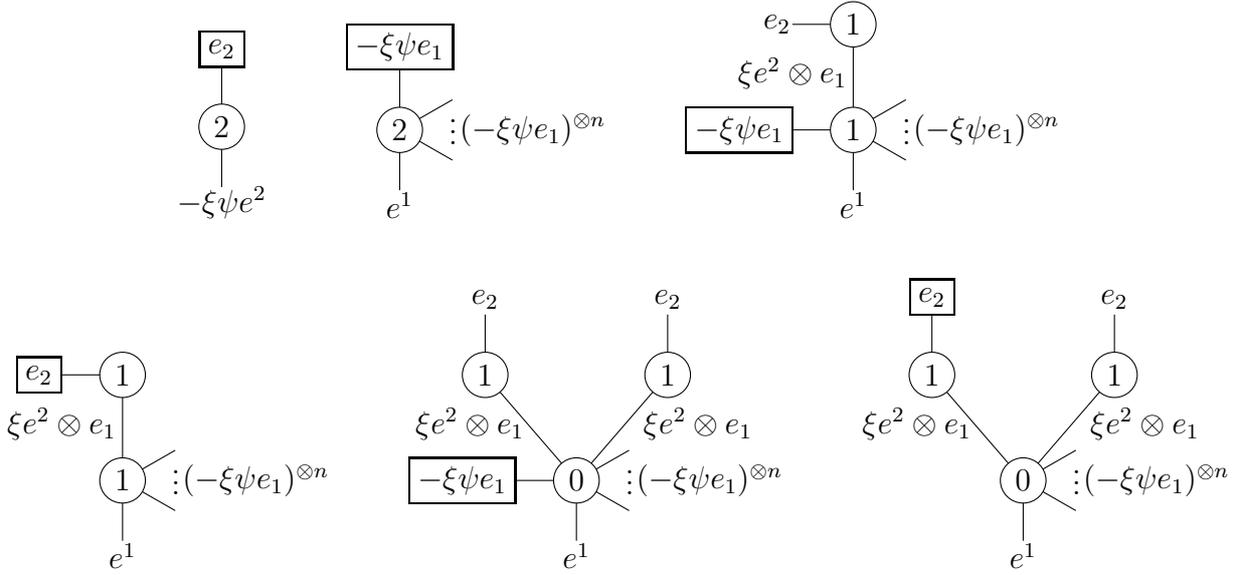

\tikz{
\leg{A}{-90};\leg{A}{90};\gg{2}{A};\lab{A}{-90}{10mm}{-\xi\psi e^2};\lab{A}{90}{10.3mm}{\boxed{e_2}};
}
\hspace{0.5cm}
\tikz{
\leg{A}{30};\leg{A}{-30};\leg{A}{-90};\leg{A}{90};\gg{2}{A};\lab{A}{0}{7mm}{\raisebox{0.3\height}{\vdots}};\lab{A}{0}{17.5mm}{\raisebox{0.3\height}{$(-\xi \psi e_1)^{\otimes n}$}};\lab{A}{-90}{10mm}{e^1};\lab{A}{90}{11mm}{\boxed{-\xi\psi e_1}};
}
\hspace{0.5cm}
\tikz{
\draw (A)--(B);\leg{A}{30};\leg{A}{-30};\leg{A}{-90};\leg{A}{180};\leg{B}{180};\gg{1}{A};\gg{1}{B};\lab{A}{0}{7mm}{\raisebox{0.3\height}{\vdots}};\lab{A}{0}{17.5mm}{\raisebox{0.3\height}{$(-\xi \psi e_1)^{\otimes n}$}};\lab{A}{-90}{10mm}{e^1};\lab{B}{180}{10mm}{e_2};\lab{A}{180}{15mm}{\boxed{-\xi\psi e_1}};\node at (-8mm,7.5mm) {$\xi e^2\otimes e_1$};
}\\[0.5cm]
\tikz{
\draw (A)--(B);\leg{A}{30};\leg{A}{-30};\leg{A}{-90};\leg{B}{180};\gg{1}{A};\gg{1}{B};\lab{A}{0}{7mm}{\raisebox{0.3\height}{\vdots}};\lab{A}{0}{17.5mm}{\raisebox{0.3\height}{$(-\xi \psi e_1)^{\otimes n}$}};\lab{A}{-90}{10mm}{e^1};\lab{B}{180}{11mm}{\boxed{e_2}};\node at (-8mm,7.5mm) {$\xi e^2\otimes e_1$};
}
\hspace{0.5cm}
\tikz{
\draw (A)--(BL);\draw (A)--(BR);\leg{A}{30};\leg{A}{180};\leg{A}{-30};\leg{A}{-90};\leg{BL}{90};\leg{BR}{90};\gg{0}{A};\gg{1}{BL};\gg{1}{BR};\lab{A}{0}{7mm}{\raisebox{0.3\height}{\vdots}};\lab{A}{0}{17.5mm}{\raisebox{0.3\height}{$(-\xi \psi e_1)^{\otimes n}$}};\lab{A}{-90}{10mm}{e^1};\lab{BL}{90}{10mm}{e_2};\lab{BR}{90}{10mm}{e_2};\lab{A}{180}{15mm}{\boxed{-\xi\psi e_1}};\node at (-14mm,7.5mm) {$\xi e^2\otimes e_1$};\node at (16mm,7.5mm) {$\xi e^2\otimes e_1$};
}
\hspace{0.5cm}
\tikz{
\draw (A)--(BL);\draw (A)--(BR);\leg{A}{30};\leg{A}{-30};\leg{A}{-90};\leg{BL}{90};\leg{BR}{90};\gg{0}{A};\gg{1}{BL};\gg{1}{BR};\lab{A}{0}{7mm}{\raisebox{0.3\height}{\vdots}};\lab{A}{0}{17.5mm}{\raisebox{0.3\height}{$(-\xi \psi e_1)^{\otimes n}$}};\lab{A}{-90}{10mm}{e^1};\lab{BL}{90}{10.3mm}{\boxed{e_2}};\lab{BR}{90}{10mm}{e_2};\node at (-14mm,7.5mm) {$\xi e^2\otimes e_1$};\node at (16mm,7.5mm) {$\xi e^2\otimes e_1$};
}
\caption{Stable trees of genus $2$ contributing to $P^1_{2,0}$.}
\label{figure:P12 genus2 graphs}
\end{figure}
and their contributions to $\Coef_{\eps^4}P^1_{2,0}$ are the following:
\begin{align*}
\text{\small tree number 1:}\hspace{0.3cm}& 0 \\
\text{\small tree number 2:}\hspace{0.3cm}& (G^1)^2\left(\frac{\xi^2}{288}\frac{u^2_{xxxx}}{(1+\xi u^2)^5}-\frac{29 \xi^3}{1152}\frac{(u^2_{xx})^2}{(1+\xi u^2)^6}-\frac{11\xi^3}{288}\frac{u^2_{xxx}u^2_x}{(1+\xi u^2)^6}\right.\\
& \hspace{1.2cm}\left.+\frac{5\xi^4}{24}\frac{u^2_{xx}(u^2_x)^2}{(1+\xi u^2)^7}-\frac{49\xi^5}{288}\frac{(u^2_x)^4}{(1+\xi u^2)^8}\right) \\
\text{\small tree number 3:}\hspace{0.3cm}& G^1G^2\left(\frac{\xi^3}{192}\frac{(u^2_{xx})^2}{(1+\xi u^2)^4}-\frac{\xi^4}{72}\frac{u^2_{xx}(u^2_x)^2}{(1+\xi u^2)^5}+\frac{\xi^3}{96}\frac{u^2_{xxx}u^2_x}{(1+\xi u^2)^4}-\frac{\xi^2}{288}\frac{u^2_{xxxx}}{(1+\xi u^2)^3}\right) \\
\text{\small tree number 4:}\hspace{0.3cm}& 0 \\
\text{\small tree number 5:}\hspace{0.3cm}& -\frac{\xi^3(G^2)^2}{1152}\frac{(u^2_{xx})^2}{(1+\xi u^2)^2} \\
\text{\small tree number 6:}\hspace{0.3cm}& 0 
\end{align*}
where the formula for the contribution of the tree number 2 is based on the computation of the integral
\begin{gather}\label{eq:genus 2 integral}
\int_{\DR_2(-\sum_{i=1}^na_i,a_1,\ldots,a_n)}\hspace{-1cm}\lambda_2\psi_2\cdots\psi_{n+1}=(n+2)!\left(\frac{m_{(4)}}{6912}+\frac{29 m_{(2,2)}}{69120}+\frac{11 m_{(3,1)}}{34560}+\frac{m_{(2,1,1)}}{1728}+\frac{7 m_{(1,1,1,1)}}{8640}\right),
\end{gather}
with $m_\lambda=m_\lambda(a_1,\ldots,a_n)$. This computation is similar to the computation of the integral~\eqref{eq:genus 1 integral}. The integral~\eqref{eq:genus 2 integral} is a homogeneous polynomial in $a_1,\ldots,a_n$ of degree $4$, which we denote by $Q_n(a_1,\ldots,a_n)$. Then using the dilaton equation we observe that
$$
\frac{Q_n(a_1,\ldots,a_{n-1},0)}{n+2}=Q_{n-1}(a_1,\ldots,a_{n-1}),
$$
which implies that 
$$
\frac{Q_n(a_1,\ldots,a_n)}{(n+2)!}=\alpha m_{(4)}+\beta m_{(2,2)}+\gamma m_{(3,1)}+\delta m_{(2,1,1)}+\zeta m_{(1,1,1,1)},
$$
for some constants $\alpha,\beta,\gamma,\delta,\zeta$ that do not depend on $n$. In order to determine them, it is enough to compute the polynomial $Q_4$: we did it using the formula for the intersection of a psi-class with the DR cycle from~\cite[Theorem~4]{BSSZ15} and with the help of \textsf{Mathematica}.

\medskip

Collecting all the computed contributions to the differential polynomial $P^1_{2,0}$ and doing the Miura transformation~\eqref{eq:special Miura transformation}, again with the help of \textsf{Mathematica}, we obtain the equations for the flow $\frac{\d}{\d t^2_0}$ in the variables $\tu^1,\tu^2$ from the statement of the theorem.
\end{proof}

\medskip


\begin{thebibliography}{ABLR23}

\bibitem[ABLR21]{ABLR21} A. Arsie, A. Buryak, P. Lorenzoni, P. Rossi. {\it Flat F-manifolds, F-CohFTs, and integrable hierarchies}. Communications in Mathematical Physics 388 (2021), no.~1, 291--328. 

\smallskip

\bibitem[ABLR23]{ABLR23} A. Arsie, A. Buryak, P. Lorenzoni, P. Rossi. {\it Semisimple flat F-manifolds in higher genus}. Communications in Mathematical Physics~397 (2023), no.~1, 141--197.

\smallskip

\bibitem[Bur15]{Bur15} A. Buryak. {\it Double ramification cycles and integrable hierarchies}. Communications in Mathematical Physics~336 (2015), no.~3, 1085--1107.

\smallskip

\bibitem[BPS12]{BPS12} A. Buryak, H. Posthuma, S. Shadrin. {\it A polynomial bracket for the Dubrovin--Zhang hierarchies}. Journal of Differential Geometry~92 (2012), no.~1, 153--185.

\smallskip

\bibitem[BR16]{BR16} A. Buryak, P. Rossi. {\it Recursion relations for double ramification hierarchies}. Communications in Mathematical Physics~342 (2016), no.~2, 533--568.

\smallskip

\bibitem[BR21]{BR21} A. Buryak, P. Rossi. {\it Extended $r$-spin theory in all genera and the discrete KdV hierarchy}. Advances in Mathematics~386 (2021), Paper No.~107794.

\smallskip

\bibitem[BRS21]{BRS21} A. Buryak, P. Rossi, S. Shadrin. {\it Towards a bihamiltonian structure for the double ramification hierarchy}. Letters in Mathematical Physics~111 (2021), no.~1, article number~13. 

\smallskip

\bibitem[BR22]{BR22} A. Buryak, P. Rossi. {\it A generalization of Witten's conjecture for the Pixton class and the noncommutative KdV hierarchy}. Journal of the Institute of Mathematics of Jussieu (2022), https://doi.org/10.1017/S1474748022000354.

\smallskip

\bibitem[BSSZ15]{BSSZ15} A. Buryak, S. Shadrin, L. Spitz, D. Zvonkine. {\it Integrals of $\psi$-classes over double ramification cycles}. American Journal of Mathematics~137 (2015), no.~3, 699--737.

\smallskip

\bibitem[DLYZ16]{DLYZ16} B. Dubrovin, S.-Q. Liu, D. Yang, Y. Zhang. {\it Hodge integrals and tau-symmetric integrable hierarchies of Hamiltonian evolutionary PDEs}. Advances in Mathematics~293 (2016), 382--435.

\smallskip

\bibitem[DZ01]{DZ01} B. Dubrovin, Y. Zhang. {\it Normal forms of hierarchies of integrable PDEs, Frobenius manifolds and Gromov--Witten invariants}. arXiv:math/0108160.

\smallskip

\bibitem[DZ04]{DZ04} B. Dubrovin, Y. Zhang, {\it Virasoro symmetries of the extended Toda hierarchy}, Communications in Mathematical Physics 250 (2004), no. 1, 161--193.

\smallskip

\bibitem[FSZ10]{FSZ10} C. Faber, S. Shadrin, D. Zvonkine. {\it Tautological relations and the $r$-spin Witten conjecture}. Annales Scientifiques de l'\'Ecole Normale Sup\'erieure (4) 43 (2010), no.~4, 621--658.

\smallskip

\bibitem[Kon92]{Kon92} M. Kontsevich. {\it Intersection theory on the moduli space of curves and the matrix Airy function}. Communications in Mathematical Physics~147 (1992), 1--23.

\smallskip

\bibitem[KM94]{KM94} M. Kontsevich, Yu. Manin. {\it Gromov--Witten classes, quantum cohomology, and enumerative geometry}. Communications in Mathematical Physics~164 (1994), no. 3, 525--562.

\smallskip

\bibitem[LRZ15]{LRZ15} S.-Q. Liu, Y. Ruan, Y. Zhang. {\it BCFG Drinfeld--Sokolov hierarchies and FJRW-theory}. Inventiones Mathematicae 201 (2015), no.~2, 711--772.

\smallskip

\bibitem[LWZ21]{LWZ21} S.-Q. Liu, Z. Wang, Y. Zhang. {\it Linearization of Virasoro symmetries associated with semisimple Frobenius manifolds}. arXiv:2109.01846.

\smallskip

\bibitem[Mac95]{Mac95}  I. G. Macdonald. {\it Symmetric functions and Hall polynomials}. Second edition. With contributions by A. Zelevinsky. Oxford Mathematical Monographs. Oxford Science Publications. The Clarendon Press, Oxford University Press, New York, 1995.

\smallskip

\bibitem[OP06]{OP06} A. Okounkov, R. Pandharipande. {\it The equivariant Gromov-Witten theory of $\mbP^1$}. Annals of Mathematics~163 (2006), no. 2, 561--605.

\smallskip

\bibitem[PPZ15]{PPZ15} R. Pandharipande, A. Pixton, D. Zvonkine. {\it Relations on $\oM_{g,n}$ via $3$-spin structures}. Journal of the American Mathematical Society~28 (2015), no. 1, 279--309.

\smallskip

\bibitem[Wit91]{Wit91} E. Witten. {\it Two dimensional gravity and intersection theory on moduli space}. Surveys in Differential Geometry~1 (1991), 243--310.

\smallskip

\bibitem[Wit93]{Wit93} E. Witten. {\it Algebraic geometry associated with matrix models of two-dimensional gravity}. Topological methods in modern mathematics (Stony Brook, NY, 1991), 235--269, Publish or Perish, Houston, TX, 1993.

\end{thebibliography}
\end{document}